\newcommand{\Am}{{\bf A}}
\newcommand{\xv}{{\bf x}}
\newcommand{\ev}{{\bf e}}
\newcommand{\muv}{\hbox{\boldmath$\mu$}}
\newcommand{\Sigmam}{\hbox{\boldmath$\Sigma$}}
\newcommand{\Sc}{\mathcal{S}} 
\newcommand{\eqdef}{\stackrel{\Delta}{=}}
\newcommand{\Hm}{{\bf H}}
\newcommand{\Sm}{{\bf S}}
\newcommand{\Nc}{ \mathcal{N}}
\newcommand{\Ac}{ \mathcal{A}}
\newcommand{\Kc}{ \mathcal{K}}
\newtheorem{prop}{Proposition}
\newcommand{\Pc}{\mathcal{P}}
\DeclareMathOperator*{\argmin}{arg\,min}
\newtheorem{definition}{Definition}
\newcommand{\sv}{{\bf s}}
\newcommand{\rv}{{\bf r}}
\newtheorem{theorem}{Theorem}{}
\newtheorem{corollary}{Corollary}{}
{}
\begin{document}


\title{An information theoretic vulnerability metric for data integrity attacks on smart grids}
\author{Xiuzhen Ye$^{1*}$, I\~naki Esnaola$^{1,2}$, Samir M. Perlaza $^{3,2}$, and Robert F. Harrison$^{1}$}

\address{\add{1}{Dept. of Automatic  Control and Systems Engineering, University of Sheffield, Sheffield S1 3JD, UK}
\add{2}{Dept. of Electrical Engineering, Princeton University, Princeton, NJ 08544, USA}
\add{3}{INRIA, Sophia Antipolis 06902, France}
\email{Xye15@sheffield.ac.uk}}

\begin{abstract}
A novel metric that describes the vulnerability of the measurements in power systems to data integrity attacks is proposed. The new metric, coined vulnerability index (VuIx), leverages information theoretic measures to assess the attack effect on the fundamental limits of the disruption and detection tradeoff. The result of computing the VuIx of the measurements in the system yields an ordering of their vulnerability based on the level of exposure to data integrity attacks. This new framework is used to assess the measurement vulnerability of IEEE 9-bus and 30-bus test systems and it is observed that power injection measurements are overwhelmingly more vulnerable to data integrity attacks than power flow measurements. A detailed numerical evaluation of the VuIx values for IEEE test systems is provided.
\end{abstract}

\maketitle

\section{Introduction}
Supervisory Control and Data Acquisition (SCADA) systems and more recently advanced communication systems facilitate efficient, economic and reliable operation of power systems~\cite{GJ_PSanalysis_1994}. For instance, the communication system transmits the measurements to a state estimator that evaluates the operational status of the system accurately~\cite{AA_PSstateestimation_04}. However, the integration between the physical layer and the cyber layer exposes the system to cybersecurity threats. 
Cyber incidents highlight the vulnerability of power systems to sophisticated attacks. To ensure the security and reliability of power system operation, it is essential to quantitatively characterize the vulnerabilities of the system in order to set up appropriate security mechanisms~\cite{WZ_CN_13}. To that end, security metrics provide operationally meaningful vulnerability descriptors and identify the impact that security threats pose to the system. Moreover, security metrics enable operators to assess the defence mechanisms requirements to be embedded into cybersecurity policies, processes, and technology~\cite{JA_PearsonEducation_07}. For example, the Common Vulnerability Scoring System (CVSS) analysis Information Technology (IT) system~\cite{MS_IEEESecurityandPrivacy_06}. Typical security metrics for power systems focus on integrity, availability, and confidentiality as envisioned by the cybersecurity working group in the NIST Smart Grid interoperability panel~\cite{NISTIR_securityguideline_10}. 
System security objectives are categorized into system vulnerability, defence power, attack severity, and situations to develop security metrics in a systematic manner~\cite{PM_ACM_16}.
A cyberphysical security assessment metric (CP-SAM) based on quantitative factors is proposed to assess the specific security challenges of microgrid systems~\cite{VV_TSG_19}.

This fragmented landscape showcases a wide variety of metrics available that depend on the security services, threat characteristics, and system parameters. Remarkably, there is a lack of general data integrity vulnerability metrics for power systems. For instance, the impact of data injection attacks (DIAs)~\cite{LY_TISSEC_11} can be assessed with a wide variety of criteria that depend on the objectives of the attackers~\cite{CKKPT_SPM_12,MO_JSAC_13,IE_TSG_16}. A large body of literature addresses DIAs that compromise both the confidentiality and integrity of the information contained by the system measurements~\cite{NG_Stateestimation_21}. With the unprecedented data acquisition capabilities available to cyberphysical systems, the attackers can even learn the statistical structure of the system and incorporate the underlying stochastic process to launch the attacks~\cite{SE_TSG_19,YE_SGC_20}. DIAs that operate within a Bayesian framework by leveraging stochastic models of the system are studied in~\cite{SE_TSG_19} and~\cite{YE_SGC_20}. From the perspective of the operator, the introduction of stochastic descriptors opens the door to information theoretic quantifications of the measurement vulnerability.

In this paper, we propose a novel information theoretic metric to assess the vulnerability of measurements in power systems to data integrity attacks. Specifically, we characterize the fundamental information loss induced by data integrity attacks via mutual information and the stealthiness of the attack via Kullback-Leibler divergence. Our aim is to provide a metric that is grounded on fundamental principles, and therefore, inform the vulnerabilities of the measurements in the system to a wide range of threats. This is enabled by the use of information theoretic measures which characterize the amount of information acquired by the measurements in the system in fundamental terms.

The rest of the paper is organized as follows: In Section~\ref{sec_system_model}, we introduce a Bayesian framework with linearized dynamics for DIAs. Information theoretic attacks are presented in Section~\ref{sec_information_theoretic_attacks}. The vulnerability metric on information theoretic attacks is proposed in Section~\ref{sec_Vulnerability_Metric_on_Information_Theoretic_Attacks}. In Section~\ref{sec_vulnerability_of_measurements}, we characterize the vulnerability of measurements in uncompromised systems and propose an algorithm to evaluate the vulnerability of measurements. The vulnerability of measurements of the IEEE test systems is presented in Section~\ref{sec_simulation}. The paper concludes in Section~\ref{sec_conclusion}.

The main contributions of this paper follow: 
(1) A notion of vulnerability for the measurements in the system is proposed. The proposed notion is characterized by the information theoretic cost induced by random attacks. Specifically, mutual information and KL divergence are used to construct a quantitative measure of vulnerability.
(2) The vulnerability assessment of the measurements is posed as a minimization problem and closed-form expressions are obtained for the case in which the initial state of the system is uncompromised.	
(3) An algorithm that computes the proposed vulnerability indices for general state estimators in power systems is proposed.
(4) The proposed framework is numerically evaluated in IEEE 9-bus and 30-bus test systems to obtain qualitative characterizations of the vulnerability of the measurements in the systems. 

\textbf{Notation:} We denote the number of state variables on a given system by $n$ and the number of the measurements by $m$. The set of positive semidefinite matrices of size $n\times n$ is denoted by $S_{+}^n$. The $n$-dimensional identity matrix is denoted as $\textbf{I}_n$. 
For a matrix $\Am \in \mathbb{R}^{m \times n}$, then $(\Am)_{ij}$ denotes the entry in row $i$ and column $j$ and $\textnormal{diag}(\Am)$ denotes the vector formed by the diagonal entries of $\Am$.
The elementary vector $\ev_i\in\mathbb{R}^n$ is a vector of zeros with a one in the $i$-th entry. Random variables are denoted by capital letters and their realizations by the corresponding lower case, e.g., $x$ is a realization of the random variable $X$. Vectors of $n$ random variables are denoted by a superscript, e.g., $X^n=(X_1, \ldots, X_n)^{\sf{T}}$ with corresponding realizations denoted by $\xv$.
Given an $n$-dimensional vector $\muv \in \mathbb{R}^n$ and a matrix $\Sigmam \in S_{+}^n$, we denote by $\mathcal{N} (\muv, \Sigmam)$ the multivariate Gaussian distribution of dimension $n$ with mean $\muv$ and covariance matrix $\Sigmam$.
The mutual information between random variables $X$ and $Y$ is denoted by $I(X;Y)$ and the Kullback-Leibler (KL) divergence between the distributions $P$ and $Q$  is denoted by $D(P\| Q)$.

\section{System model}\label{sec_system_model}
\subsection{Observation Model}
In a power system the state vector $\xv \in{\mathbb{R}^n}$ that contains the voltages and phase angles at all the buses describes the operational state of the system. State vector $\xv$ is observed by the acquisition function $F: {\mathbb{R}^n} \rightarrow {\mathbb{R}^m}$. A linearized observation model is considered for state estimation, which yields the observation model as 
\begin{equation}\label{eq:obs_noattack}
	Y^m  = \Hm\xv+Z^m,
\end{equation}
where $\Hm \in {\mathbb{R}^{m \times n}}$ is the Jacobian of the function $F$ at a given operating point and is determined by the parameters and topology of the system. The vector of measurements $Y^m$ is corrupted by additive white Gaussian noise introduced by the sensors~\cite{GJ_PSanalysis_1994},~\cite{AA_PSstateestimation_04}. The noise vector $Z^m$ follows a multivariate Gaussian distribution, that is,
\begin{equation} \label{Z_distribution}
Z^m \sim \mathcal{N}(\textbf{0},\sigma^2 \textrm{\textbf{I}}_m),
\end{equation}
where $\sigma^2$ is the noise variance.
 
In a Bayesian estimation framework, the state variables are described by a vector of random variables $X^n$ with a given distribution. In this study, we assume $X^n$ follows a multivariable Gaussian distribution~\cite{GE_TSG_18} with zero mean and covariance matrix $\Sigmam_{X\!X} \in S_{+}^n$, that is,
\begin{equation}\label{Sigma_xx}
	X^n \sim \mathcal{N}(\textbf{0},\Sigmam_{X\!X}).
\end{equation}
From~(\ref{eq:obs_noattack}), it follows that the vector of measurements is with zero mean and covariance matrix $\Sigmam_{Y\!Y} \in S_{+}^m$, that is, 
\begin{equation}\label{Y_Sigma_YY}
Y^m \sim  \mathcal{N}(\textbf{0},\Sigmam_{Y\!Y}),
\end{equation}
where 
\begin{equation}\label{20220706_1}
\Sigmam_{Y\!Y} \eqdef \Hm\Sigmam_{X\!X}\Hm^{\sf{T}}+\sigma^2 \textrm{\textbf{I}}_m.
\end{equation}
\subsection{Attack Setting}
Let us denote the measurements corrupted by the malicious attack vector $A^m \in \mathbb{R}^m$ as $Y_A^m$, that is,
\begin{equation}\label{obs_attack}
	Y_A^m  = \Hm X^n+Z^m + A^m,
\end{equation}
where $A^m \sim P_{A^m}$ is the distribution of the random attack vector $A^m$. With a fixed covariance matrix $\Sigmam_{A\!A}$, when the additive disturbance to the system, that is, $Z^m + A^m$ follows a multivariate Gaussian distribution, the mutual information between the state variables $X^n $ and the compromised measurements $Y_A^m$ denoted by $I(X^n;Y^m_A)$ is minimized~\cite{SI_TIT_13}. Hence, from the L\'{e}vy-Cram\'{e}r decomposition theorem~\cite{Levy_levycramertheorem_35,cramer_levycramertheorem_35}, it holds that for the sum $Z^m + A^m$ to be Gaussian, given that $Z^m$ satisfies \eqref{Z_distribution}, then, $A^m$ must be Gaussian. Therefore, in the following, we assume that 
\begin{equation}\label{eq:Gauss_attack}
	A^m \sim \mathcal{N} (\textbf{0}, \Sigmam_{A\!A}),
\end{equation}
where $\textbf{0} = (0,0,\ldots,0)$ and $\Sigmam_{A\!A} \in \Sc_+^m$ are the mean vector and the covariance matrix of the random attack vector $A^m$. 
The assumption in~\eqref{eq:Gauss_attack} is further discussed in Section~\ref{sec_information_theoretic_attacks}. Consequently, the vector of compromised measurements $Y_A^m$ follows a multivariate Gaussian distribution with zero mean and covariance matrix $\Sigmam_{Y_A\!Y_A} \in \Sc_+^m$, that is,
\begin{equation}\label{eq:obs_attack} 
	Y^m_A  \sim \mathcal{N} (\textbf{0},\Sigmam_{Y_A\!Y_A}),
\end{equation}
with 
\begin{equation}\label{eq:Sigma_YA} 
\Sigmam_{Y_A\!Y_A} \eqdef \Hm\Sigmam_{X\!X}\Hm^{\sf{T}} + \sigma^2 \textrm{\textbf{I}}_m + \Sigmam_{A\!A}. 
\end{equation}

\vspace{-5mm}
\section{Information Theoretic Attacks}\label{sec_information_theoretic_attacks}
The aim of the attack is two fold. Firstly, the attack aims to disrupt the state
estimation procedure. Secondly, it aims to stay undetected. For the first objective, we minimize the mutual information between the vector of state variables $X^n$ in~\eqref{Sigma_xx} and the vector of compromised measurements $Y_A^m$ in~\eqref{obs_attack}, that is, $I(X^n;Y_A^m)$. 
In the other words, the attack yields less information about the state variables contained by the compromised measurements.
The stealthy constraint in the second objective is captured by the Kullback Leibler (KL) divergence between the distribution $P_{Y^m_A}$ in~\eqref{obs_attack} and the distribution $P_{Y^m}$ in~\eqref{eq:obs_noattack}, that is, $D(P_{Y_A^m}\|P_{Y^m})$. The Chernoff-Stein Lemma~\cite{book_EIT} states that the minimization of KL divergence leads to the minimization of the asymptotic detection probability. 

The following propositions characterize mutual information and KL divergence with Gaussian state variables and attacks, respectively~\cite[Prop. 1, 2]{SE_TSG_19}.
\begin{prop}\label{Prop_MI}
	The mutual information between the vectors of random variables $X^n$ in~\eqref{Sigma_xx} and $Y_A^m$ in~\eqref{eq:obs_attack} is 
	\begin{eqnarray}\label{eq_MI}
	I(X^n;Y^m_A) 
		= \dfrac{1}{2}\log \dfrac{|\Sigmam_{X\!X}||\Sigmam_{Y_A\!Y_A}|}{|\Sigmam|},
	\end{eqnarray}
where the matrices $\Sigmam_{X\! X}$ and $\Sigmam_{Y_A\!Y_A}$ are in~\eqref{Sigma_xx} and~\eqref{eq:Sigma_YA}, respectively; and the matrix $\Sigmam$ is the covariance matrix of the joint distribution of $X^n$ and $Y_A^m$, that is, $(X^n;Y_A^m) \sim \Nc (\textbf{0}, \Sigmam)$ with
	\begin{eqnarray}\label{Sigmam_def}
	\Sigmam = \left(\begin{matrix} \Sigmam_{X\!X} & \Sigmam_{X\!X}\Hm^{\sf T} \\ \Hm\Sigmam_{X\!X}  & \Hm\Sigmam_{X\!X}\Hm^{\sf T} + \sigma^2\textbf{I}_m + \Sigmam_{A\!A} \end{matrix} \right),
\end{eqnarray} 
	where the real $\sigma \in \mathbb{R}_+$ is in~\eqref{Z_distribution}; the matrices $\Hm$ and $\Sigmam_{A\!A}$ are in~\eqref{eq:obs_noattack} and~\eqref{eq:Gauss_attack}, respectively.
\end{prop}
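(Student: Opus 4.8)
The plan is to evaluate the mutual information through the differential entropy decomposition that holds for jointly Gaussian vectors. First I would verify that the stacked vector $(X^n, Y_A^m)^{\sf T}$ is Gaussian with zero mean: since $X^n$ in~\eqref{Sigma_xx}, $Z^m$ in~\eqref{Z_distribution}, and $A^m$ in~\eqref{eq:Gauss_attack} are mutually independent zero-mean Gaussian vectors, and $Y_A^m = \Hm X^n + Z^m + A^m$ in~\eqref{obs_attack} is an affine function of them, the pair is jointly Gaussian. The only substantive computation is then identifying the joint covariance matrix.

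Next I would assemble $\Sigmam$ block by block. The top-left block is $\Sigmam_{X\!X}$ by definition. The cross-covariance is $\mathbb{E}[X^n (\Hm X^n + Z^m + A^m)^{\sf T}] = \Sigmam_{X\!X}\Hm^{\sf T}$, where the contributions of $Z^m$ and $A^m$ vanish by independence and zero mean. The bottom-right block is $\mathbb{E}[Y_A^m (Y_A^m)^{\sf T}] = \Hm\Sigmam_{X\!X}\Hm^{\sf T} + \sigma^2\textbf{I}_m + \Sigmam_{A\!A}$, which is exactly $\Sigmam_{Y_A\!Y_A}$ in~\eqref{eq:Sigma_YA}. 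This reproduces the block structure of $\Sigmam$ in~\eqref{Sigmam_def}.

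I would then invoke the differential entropy of a $d$-dimensional Gaussian vector $W \sim \Nc(\muv, \mathbf{K})$, namely $h(W) = \frac{1}{2}\log\left((2\pi e)^d |\mathbf{K}|\right)$, together with the identity $I(X^n; Y_A^m) = h(X^n) + h(Y_A^m) - h(X^n, Y_A^m)$. Substituting the three entropies, the factors $(2\pi e)^n$ and $(2\pi e)^m$ in the numerator cancel against $(2\pi e)^{n+m}$ in the denominator, leaving $\frac{1}{2}\log\frac{|\Sigmam_{X\!X}||\Sigmam_{Y_A\!Y_A}|}{|\Sigmam|}$ as claimed.

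I do not expect a genuine obstacle, since the statement is a direct specialization of the Gaussian mutual information formula. The two points warranting care are confirming that $A^m$ is independent of both $X^n$ and $Z^m$ so that all cross terms in the covariance computation drop out, and ensuring that $\Sigmam$ is nonsingular so that the determinant ratio and each differential entropy are well defined; the latter follows from $\Sigmam_{X\!X} \in \Sc_+^n$ together with the strictly positive definite term $\sigma^2\textbf{I}_m$ appearing in the conditional covariance of $Y_A^m$ given $X^n$.
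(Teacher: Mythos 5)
Your derivation is correct, but there is nothing in the paper to compare it against: the paper does not prove Proposition~\ref{Prop_MI} at all, importing it (together with Proposition~\ref{Prop_KL}) by citation from~\cite[Prop.~1]{SE_TSG_19}. Your route --- establish joint Gaussianity of the stacked vector $(X^n, Y_A^m)$ as an affine image of independent Gaussians, identify the blocks of $\Sigmam$ via the zero-mean and independence assumptions, and then apply $I(X^n;Y_A^m) = h(X^n) + h(Y_A^m) - h(X^n,Y_A^m)$ with the Gaussian entropy formula $h(W)=\frac{1}{2}\log\bigl((2\pi e)^d|\mathbf{K}|\bigr)$ --- is the canonical derivation of this fact, and it buys the paper a self-contained statement rather than an external pointer. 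Your block computation also makes explicit an assumption the paper leaves implicit, namely that $A^m$ is independent of both $X^n$ and $Z^m$; without it the cross-covariance would not reduce to $\Sigmam_{X\!X}\Hm^{\sf T}$.

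One detail in your closing remark is not right, however. You claim that nonsingularity of $\Sigmam$ ``follows from $\Sigmam_{X\!X} \in S_{+}^n$ together with the strictly positive definite term $\sigma^2\textbf{I}_m$.'' Positive semidefiniteness of $\Sigmam_{X\!X}$ is not enough: if $v^{\sf T}\Sigmam_{X\!X} = \textbf{0}$ for some $v \neq \textbf{0}$, then the corresponding row block of $\Sigmam$ is $(v^{\sf T}\Sigmam_{X\!X},\, v^{\sf T}\Sigmam_{X\!X}\Hm^{\sf T}) = \textbf{0}$, so $\Sigmam$ is singular, $h(X^n) = -\infty$, and the stated ratio is $0/0$. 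The formula in the proposition therefore implicitly requires $\Sigmam_{X\!X} \succ 0$ (a hypothesis the paper is equally loose about). You can either add that hypothesis, or sidestep it entirely by using the decomposition $I(X^n;Y_A^m) = h(Y_A^m) - h(Y_A^m \mid X^n)$, which needs only $\sigma^2\textbf{I}_m + \Sigmam_{A\!A} \succ 0$ and yields
\begin{equation*}
I(X^n;Y_A^m) = \frac{1}{2}\log\frac{|\Sigmam_{Y_A\!Y_A}|}{|\sigma^2\textbf{I}_m + \Sigmam_{A\!A}|},
\end{equation*}
which agrees with the claimed expression when $\Sigmam_{X\!X} \succ 0$ (by the Schur-complement identity $|\Sigmam| = |\Sigmam_{X\!X}|\,|\sigma^2\textbf{I}_m + \Sigmam_{A\!A}|$) and is in fact the form the paper itself uses later in~\eqref{20220924_3}.
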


\begin{prop}\label{Prop_KL}
	The KL divergence between the distribution of random vector $Y_A^m$ in~\eqref{eq:obs_attack} and the distribution of random vector $Y^m$ in~\eqref{Y_Sigma_YY} is 
	\begin{align}\label{eq_KL} 
	 D(P_{Y_A^m}\|P_{Y^m}) =  \frac{1}{2}\left( \log\frac{\left| \Sigmam_{Y\!Y} \right|}{\left|\Sigmam_{Y_A\!Y_A}\right|} -m +  \textnormal{tr} \left(\Sigmam_{Y\!Y}^{-1} \Sigmam_{Y_A\!Y_A} \right)  \right),
	\end{align}
	where the matrices $\Sigmam_{Y\!Y}$ and $\Sigmam_{A\!A}$ are in~\eqref{20220706_1} and \eqref{eq:Gauss_attack}, respectively.
\end{prop}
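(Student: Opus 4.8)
The plan is to compute the divergence directly from its definition as an expectation under the compromised-measurement law. Since both $P_{Y_A^m}$ and $P_{Y^m}$ are zero-mean multivariate Gaussians on $\mathbb{R}^m$ with covariance matrices $\Sigmam_{Y_A\!Y_A}$ and $\Sigmam_{Y\!Y}$ respectively, I would start by writing
\begin{equation}
D(P_{Y_A^m}\|P_{Y^m}) = \mathbb{E}_{P_{Y_A^m}}\!\left[\log \frac{p_{Y_A^m}(\xv)}{p_{Y^m}(\xv)}\right],
\end{equation}
and then substitute the Gaussian density $p(\xv) = (2\pi)^{-m/2}|\Sigmam|^{-1/2}\exp(-\tfrac{1}{2}\xv^{\sf T}\Sigmam^{-1}\xv)$ for each distribution, with $\Sigmam$ equal to $\Sigmam_{Y_A\!Y_A}$ and $\Sigmam_{Y\!Y}$ respectively.

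Next I would expand the log-ratio. The normalization constants collapse into the determinant term, while the exponents contribute two quadratic forms, giving
\begin{equation}
\log \frac{p_{Y_A^m}(\xv)}{p_{Y^m}(\xv)} = \frac{1}{2}\log \frac{|\Sigmam_{Y\!Y}|}{|\Sigmam_{Y_A\!Y_A}|} - \frac{1}{2}\xv^{\sf T}\Sigmam_{Y_A\!Y_A}^{-1}\xv + \frac{1}{2}\xv^{\sf T}\Sigmam_{Y\!Y}^{-1}\xv,
\end{equation}
where the zero means ensure no linear terms survive.

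The crux is then taking the expectation of the two quadratic forms with respect to $\xv$ drawn from $P_{Y_A^m}=\Nc(\mathbf{0},\Sigmam_{Y_A\!Y_A})$. The key identity is $\mathbb{E}[\xv^{\sf T}\Sm\,\xv] = \textnormal{tr}(\Sm\,\Sigmam_{Y_A\!Y_A})$ for any fixed matrix $\Sm$, valid precisely because the mean is zero so that the term $\muv^{\sf T}\Sm\,\muv$ vanishes. Applying it to the first quadratic with $\Sm = \Sigmam_{Y_A\!Y_A}^{-1}$ yields $\textnormal{tr}(\textbf{I}_m) = m$, producing the $-m$ term; applying it to the second with $\Sm = \Sigmam_{Y\!Y}^{-1}$ yields $\textnormal{tr}(\Sigmam_{Y\!Y}^{-1}\Sigmam_{Y_A\!Y_A})$. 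Collecting the three contributions together with the overall factor $\tfrac{1}{2}$ reproduces~\eqref{eq_KL} exactly.

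I do not expect a genuine obstacle, as this is a standard Gaussian calculation; the only point requiring care is the quadratic-form expectation identity, which depends on the zero-mean assumption in~\eqref{eq:obs_attack} and~\eqref{Y_Sigma_YY}. For well-posedness I would also note that both covariance matrices lie in $\Sc_+^m$, so they are invertible and the densities, their ratio, and the divergence are all well defined.
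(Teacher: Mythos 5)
Your computation is correct: writing the divergence as the expectation of the log-density ratio under $P_{Y_A^m}$, cancelling the $(2\pi)^{-m/2}$ normalizations into the determinant term, and evaluating the two quadratic forms via $\mathbb{E}[\xv^{\sf T}\Sm\,\xv]=\textnormal{tr}(\Sm\,\Sigmam_{Y_A\!Y_A})$ gives exactly~\eqref{eq_KL}. Note, however, that the paper itself contains no proof of Proposition~\ref{Prop_KL} to compare against: the result is imported by citation from~\cite{SE_TSG_19}, so your derivation simply supplies the standard Gaussian KL computation that the paper delegates to that reference. There is no methodological divergence to speak of; yours is the canonical argument.

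One small imprecision is worth fixing in your well-posedness remark. Membership in $\Sc_+^m$ (positive semidefiniteness) does not by itself guarantee that the covariance matrices are invertible or that densities exist. What actually makes the calculation well posed is the sensor noise: $\Sigmam_{Y\!Y}=\Hm\Sigmam_{X\!X}\Hm^{\sf T}+\sigma^2\textbf{I}_m$ dominates $\sigma^2\textbf{I}_m$ in the positive semidefinite order, and $\Sigmam_{Y_A\!Y_A}=\Sigmam_{Y\!Y}+\Sigmam_{A\!A}$ dominates it as well, so both matrices are strictly positive definite; this is what licenses the densities, the inverse $\Sigmam_{Y\!Y}^{-1}$, and the finiteness of the divergence. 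With that correction your proof is complete.
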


Particularly, the attack construction is proposed in the following optimization problem~\cite{SE_TSG_19,YE_SGC_20}:
\begin{equation}\label{eq:stealth_opt}
	\min_{P_{A^m}} I(X^n;Y^m_A)+ \lambda D(P_{Y_A^m}\|P_{Y^m}),
\end{equation}
where $\lambda \in \mathbb{R}_+$ is the weighting parameter that determines the tradeoff between mutual information and KL divergence. 
Note that the optimization domain in~\eqref{eq:stealth_opt} is the set of $m$-dimensional Gaussian multivariate distributions. The optimal Gaussian attack for $\lambda \geq 1$ as a solution to~\eqref{eq:stealth_opt} is given by~\cite{SE_TSG_19} 
\begin{equation}\label{opt_stealth}
P_{A^m}^* \sim\Nc(\mathbf{0},{\lambda^{-1/2}}\Hm\Sigmam_{X\!X}\Hm^{\sf T}).
\end{equation}
Note that the attack realizations from~\eqref{opt_stealth} are nonzero with probability one, that is, $\mathbb{P}[\left| \textnormal{supp}({A^m})\right |= m] = 1$, where
\begin{equation}
\textnormal{supp}({A^m})\eqdef\left\{i:\mathbb{P}\left[A_i=0\right]=0\right\}.
\end{equation}
The attack implementation requires access to the sensing infrastructure of the industrial control system (ICS) operating the power system. 
For that reason, the attack construction incorporates the sparsity constraint by limiting the optimization domain over the attack vector $A^m$ in~\eqref{obs_attack} to the distributions with cardinality of the support satisfying $|\textnormal{supp}({A^m})|=k\leq m$, that is,
\begin{equation}
\label{EqSparcityDomain}
\Pc_k\eqdef\bigcup_{i=1}^k\left \{{A^m}\sim \Nc(\mathbf{0},\bar{\Sigmam}):\left| \textnormal{supp}({A^m})\right | =i\right \}.
\end{equation}
The resulting attack construction with the sparsity constraints is
\begin{equation}\label{eq:sparsestealth_opt} 
	\min_{\Pc_k} I(X^n;Y^m_A)+ \lambda D(P_{Y_A^m}\|P_{Y^m}).
\end{equation}
The following theorem provides the optimal single sensor attack construction.
\begin{theorem}\cite[Th. 1]{YE_SGC_20}\label{Theorem_singlesensor} 
	The solution to the sparse stealth attack construction problem in~\eqref{eq:sparsestealth_opt}  for the case $k=1$ is 
	\begin{equation}\label{26}
		\bar{\Sigmam}^*=v\ev_{i}\ev_{i}^{\sf T},
	\end{equation}
	where
		\begin{align}
			i&= \argmin_{j \in \{1,2,\ldots,m\}} \left\{\left(\Sigmam_{Y\!Y}^{-1}\right)_{jj}\right\} \label{eq:opt_i}\\
			v&=- \frac{\sigma^2}{2} + \frac{1}{2}\left( \sigma^4 -\frac{4 (\underbar{$w$}\sigma^2 -1)}{\lambda \underbar{$w$}^2} \right)^{\frac{1}{2}} \label{eq:opt_v}
		\end{align}
	with $\underbar{$w$}\eqdef (\Sigmam_{Y\!Y}^{-1})_{ii}$.
\end{theorem}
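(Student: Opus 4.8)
The plan is to reduce the optimization in~\eqref{eq:sparsestealth_opt} for $k=1$ to a scalar problem and solve it in closed form. Any $\Sigmam_{A\!A}\in\Pc_1$ is a rank-one positive semidefinite matrix supported on a single coordinate, i.e. $\Sigmam_{A\!A}=v\ev_i\ev_i^{\sf T}$ with attacked sensor $i\in\{1,\ldots,m\}$ and variance $v>0$, so the objective becomes a function $f(i,v)$ of the discrete index $i$ and the continuous variable $v$. First I would obtain an explicit expression for $f(i,v)$ by substituting Propositions~\ref{Prop_MI} and~\ref{Prop_KL}. The key simplifications are: a Schur-complement evaluation of $|\Sigmam|$ in~\eqref{Sigmam_def}, which collapses to $|\Sigmam_{X\!X}|\,|\sigma^2\textbf{I}_m+\Sigmam_{A\!A}|$ so that the $|\Sigmam_{X\!X}|$ factors cancel in the mutual information; the matrix determinant lemma applied to $|\Sigmam_{Y_A\!Y_A}|=|\Sigmam_{Y\!Y}|(1+vw)$ and to $|\sigma^2\textbf{I}_m+v\ev_i\ev_i^{\sf T}|=(\sigma^2)^{m-1}(\sigma^2+v)$; and the trace identity $\textnormal{tr}(\Sigmam_{Y\!Y}^{-1}\Sigmam_{Y_A\!Y_A})=m+vw$, where $w\eqdef(\Sigmam_{Y\!Y}^{-1})_{ii}$. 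Collecting terms, all dependence on $i$ enters only through $w$, and up to an additive constant the objective reads
\begin{equation}
f=\frac{1-\lambda}{2}\log(1+vw)-\frac{1}{2}\log(\sigma^2+v)+\frac{\lambda}{2}vw.
\end{equation}

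For the first step I would fix $i$ and minimize over $v$. Differentiating twice gives $\partial^2 f/\partial v^2=\tfrac{1}{2}(\sigma^2+v)^{-2}-\tfrac{1-\lambda}{2}w^2(1+vw)^{-2}$, which is strictly positive for $\lambda\geq1$, so $f$ is strictly convex in $v$ and its stationary point is the unique minimizer. Setting $\partial f/\partial v=0$ and clearing denominators, the transcendental terms cancel and the condition reduces to the quadratic $v^2+\sigma^2 v+(\sigma^2 w-1)/(\lambda w^2)=0$, whose positive root is exactly~\eqref{eq:opt_v}. To confirm that this root is admissible ($v>0$) I would use that $\Sigmam_{Y\!Y}-\sigma^2\textbf{I}_m=\Hm\Sigmam_{X\!X}\Hm^{\sf T}$ is positive semidefinite, which forces $w=\ev_i^{\sf T}\Sigmam_{Y\!Y}^{-1}\ev_i\leq\sigma^{-2}$; the constant term of the quadratic is then non-positive, so the two roots have opposite signs and the larger one is the admissible positive solution.

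For the second step I would characterize the value function $g(w)\eqdef\min_{v>0}f(i,v)$ and show it is increasing, so that the smallest admissible $w$ is optimal. The clean route is the envelope theorem: since $\partial f/\partial v=0$ at the minimizer $v^*(w)$, we have $g'(w)=\partial f/\partial w|_{v=v^*}=\tfrac{v^*}{2}\,(1+\lambda v^* w)(1+v^*w)^{-1}>0$ for $v^*>0$, $w>0$ and $\lambda\geq1$. Hence $g$ is strictly increasing on $(0,\sigma^{-2})$, and minimizing over the discrete set of diagonal entries $\{(\Sigmam_{Y\!Y}^{-1})_{jj}\}_{j=1}^m$ selects $i=\argmin_j(\Sigmam_{Y\!Y}^{-1})_{jj}$, which is~\eqref{eq:opt_i}. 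Combining the two steps yields $\bar{\Sigmam}^*=v\ev_i\ev_i^{\sf T}$ with $i$ and $v$ as claimed.

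I expect the main obstacle to be the second step: proving that decreasing $w$ decreases the optimized objective. A direct back-substitution of~\eqref{eq:opt_v} into $f$ produces an unwieldy expression whose monotonicity in $w$ is not transparent, so the envelope-theorem argument---reducing the question to the sign of a single partial derivative---is the decisive simplification. A minor secondary point is the admissibility check $v>0$, which hinges entirely on the positive-semidefiniteness of $\Hm\Sigmam_{X\!X}\Hm^{\sf T}$ and the resulting bound $w\leq\sigma^{-2}$.
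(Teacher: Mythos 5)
Your proof is correct, but note that this paper never actually proves Theorem~\ref{Theorem_singlesensor}: the result is imported by citation from \cite[Th.~1]{YE_SGC_20}, so the only in-paper material to compare against is the proof of the closely related Theorem~\ref{Theorem_vulnerability_singlesensor}. Your argument is essentially a superset of that proof. The algebraic reduction you carry out (Schur complement cancelling $|\Sigmam_{X\!X}|$, the matrix determinant lemma, and the trace identity) is precisely the paper's chain \eqref{20220924_2}--\eqref{20220924_3} and \eqref{20220924_5}--\eqref{20220924_7}, collapsing the objective to $\frac{1-\lambda}{2}\log(1+vw)-\frac{1}{2}\log(\sigma^2+v)+\frac{\lambda}{2}vw$ with $w=(\Sigmam_{Y\!Y}^{-1})_{ii}$, up to constants. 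Where the paper's Theorem~\ref{Theorem_vulnerability_singlesensor} can treat $v$ as fixed and only needs monotonicity of $t\mapsto(1-\lambda)\log(1+t)+\lambda t$, whose derivative is $(1+\lambda t)/(1+t)>0$, your proof must also optimize over $v$ --- the part this paper omits by citation --- and you do so correctly: strict convexity in $v$ for $\lambda\geq 1$, the quadratic $v^2+\sigma^2 v+(\sigma^2 w-1)/(\lambda w^2)=0$ whose admissible root is exactly \eqref{eq:opt_v}, and a Danskin/envelope argument giving $g'(w)=\frac{v^*}{2}\,(1+\lambda v^*w)/(1+v^*w)>0$, hence \eqref{eq:opt_i}. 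It is a nice observation (worth making explicit) that your envelope derivative involves the same kernel $(1+\lambda t)/(1+t)$, with $t=v^*w$, that drives the paper's fixed-$v$ monotonicity step in \eqref{20221018_1}, so both routes agree on why the smallest diagonal entry of $\Sigmam_{Y\!Y}^{-1}$ wins. Two cosmetic caveats: convexity genuinely requires $\lambda\geq 1$, which is the regime in which \eqref{opt_stealth} and the theorem are stated, so you should flag that hypothesis; and at the degenerate point $w=\sigma^{-2}$ (possible only if $(\Hm\Sigmam_{X\!X}\Hm^{\sf T})_{ii}=0$) your ``two roots of opposite sign'' claim weakens to roots $0$ and $-\sigma^2$, giving $v^*=0$ rather than $v^*>0$.
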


\section{Vulnerability Metric on Information Theoretic Attacks}\label{sec_Vulnerability_Metric_on_Information_Theoretic_Attacks}
\subsection{Attack Structure with Sequential Measurement Selection}
To assess the impact of the attacks to different measurements, we model the entries of the random attack vector $A^m$ with independency, that is, 
\begin{equation}
	\label{EqIndepPA}
	P_{A^m}=\prod_{i=1}^m P_{A_i},
\end{equation} 
where $A_i$ is the $i$-th entry of $A^m$ and for all $i \in \lbrace 1,2, \ldots, m\rbrace$, the distribution $P_{A_i}$ is Gaussian with zero mean and variance $v \in \mathbb{R}_+$, that is, $A_i \sim \Nc (0,v)$.  
Consider that $k$ sensors have been attacked with $k \in \{0,1,2,\ldots,m-1\}$ and let the covariance matrix of the corresponding attack vector $A^m$ in~\eqref{obs_attack} be
\begin{equation}\label{20220310_1}
\Sigmam \in \Sc_k,
\end{equation}
where $\Sc_k$ is the set of $m$-dimensional positive semidefinite matrix with $k$ positive entries in the diagonal, that is,
\begin{equation}\label{20220609_1}
\Sc_k \eqdef \{\Sm \in \Sc_+^m:\|\textnormal{diag}(\Sm)\|_0 = k\}.
\end{equation}
Let the set of measurements that have not been compromised be
\begin{eqnarray}\label{set_K_o}
	\Kc_o \eqdef   \lbrace i \in \lbrace 1,2, \ldots, m \rbrace: (\Sigmam)_{ii} = 0 \rbrace,
\end{eqnarray} 
where $(\Sigmam)_{ii}$ is the entry of $\Sigmam$ in row $i$ and column $i$.
The sequential measurement selection imposes the following structure in the covariance matrix of the attack vector in~\eqref{eq:Gauss_attack}:
\begin{equation}\label{20220312_1}
\Sigmam_{A\!A} = \Sigmam+v\ev_i \ev_i^{\sf T},
\end{equation}
where $i\in\Kc_o$ and $v\in\mathbb{R}_+$. From equation~\eqref{20220312_1}, the cost function $f: \Sc_k \times \mathbb{R}_+ \times \mathbb{R}_+ \times \Kc_o \rightarrow \mathbb{R}_+$ defined by adding \eqref{eq_MI} and~\eqref{eq_KL} is as follows:
\begin{align} \label{MI_f_k}
	&f(\Sigmam, \lambda,v, i) \\
	\eqdef & I(X^n;Y^m_A) +\lambda D(P_{Y_A^m}\|P_{Y^m}) \label{20220924_1}\\
	= & \dfrac{1}{2}\log \dfrac{|\Sigmam_{X\!X}||\Sigmam_{Y_A\!Y_A}|}{|\Sigmam|} \label{20220924_2}\\
	\nonumber
	&+ \frac{1}{2}\lambda\left( \log\frac{\left| \Sigmam_{Y\!Y} \right|}{\left|\Sigmam_{Y_A\!Y_A}\right|} -m +  \textnormal{tr} \left(\Sigmam_{Y\!Y}^{-1} \Sigmam_{Y_A\!Y_A} \right)  \right)\\
	=& \dfrac{1}{2}\log \dfrac{ |\Sigmam_{Y_A\!Y_A}|}{|\sigma^2\textbf{I}_m + \Sigmam_{A\!A}|}  +  \frac{1}{2}\lambda\left( \log\frac{\left| \Sigmam_{Y\!Y} \right|}{\left|\Sigmam_{Y_A\!Y_A}\right|}  +  \textnormal{tr} \left(\Sigmam_{Y\!Y}^{-1} \Sigmam_{A\!A} \right)  \right),\label{20220924_3}\\
		\nonumber
    = & \frac{1}{2}(1-\lambda)\textnormal{log}  \left| \Sigmam_{Y\!Y} + \Sigmam + v\ev_i \ev_i^{\sf T} \right| - \frac{1}{2}\textnormal{log}  \left|\Sigmam + v\ev_i \ev_i^{\sf T} +\sigma^2 \textbf{I}_m\right| \\
	&+\frac{1}{2} \lambda \left( \textnormal{tr}\left(\Sigmam_{Y\!Y}^{-1}\left(\Sigmam + v\ev_i \ev_i^{\sf T}\right) \right) + \log \left| \Sigmam_{Y\!Y}\right|\right)\label{20220915_2},
\end{align} 
where the inequality in~\eqref{20220924_2} holds from plugging in~\eqref{eq_MI} and~\eqref{eq_KL} into~\eqref{20220924_1}; the equality in~\eqref{20220924_3} follows from cancelling $|\Sigmam_{X\!X}|$ in the first term~\cite[Sec. 14.17]{Seber_book} and noting that $\Sigmam_{Y_A\!Y_A} = \Sigmam_{Y\!Y} + \Sigmam_{A\!A}$ in~\eqref{eq:Sigma_YA}; and the equality in~\eqref{20220915_2} holds from plugging in~\eqref{20220312_1} into~\eqref{20220924_3}.
\subsection{Information theoretic vulnerability of a measurement}

We propose a notion of vulnerability that is linked to the information theoretic cost function proposed in \cite{SE_TSG_19} to characterize the disruption and detection tradeoff incurred by the attacks. Taking the state of the system with $k$ compromised measurements as the baseline, we quantify the vulnerability of measurement $i$ in terms of the cost decrease induced by attacking the correspond sensor $i$ that obtains the measurement, with $i\in \Kc_o$. 
In the following, we define the vulnerability of a measurement.
\begin{definition}
	\label{def:vul_unc}
	The function $\Delta: \Sc^m_+ \times \mathbb{R}_+ \times \mathbb{R}_+ \times \Kc_o \rightarrow \mathbb{R}_+$, where $\Kc_o$ is in~\eqref{set_K_o}, defines the {\it vulnerability of measurement $i$} in the following form:
	\begin{equation}
	\label{def_Delta_i}
	\Delta(\Sigmam, \lambda,v, i) \eqdef f(\Sigmam, \lambda,v, i) -f(\Sigmam, \lambda,0, i),
\end{equation}
	where the function $f$ is defined in~\eqref{MI_f_k}.
\end{definition}
Note that the attacker aims to minimize~\eqref{MI_f_k} by choosing an index $i$ and a variance $v$, and therefore, the definition above implies that given that $k$ sensors in $\{1,2,\ldots,m\}\setminus\Kc_o$ are already attacked in the system, the most vulnerable measurement is obtained by solving the following minimization problem
\begin{equation}\label{20020310_13}
\min_{i \in \Kc_o} \Delta(\Sigmam, \lambda, v, i),
\end{equation}
where $\Kc_o$ is defined in~\eqref{set_K_o}.

\section{Vulnerability of Measurements}\label{sec_vulnerability_of_measurements}
\subsection{Vulnerability analysis of uncompromised systems}\label{Vulnerability analysis on single sensor attack}
We first consider the case in which no sensors are under attacks, that is, $k=0$ and the following holds 
	\begin{align}
		\Sigmam = & \ \textbf{0},\label{20220915_1}  \\
		\Kc_o =  & \  \lbrace 1,2, \ldots, m \rbrace.\label{20220915_3}
	\end{align}

The attacker selects a single sensor and corrupts the corresponding measurement with a given budget $v\leq v_0$. We quantify the vulnerability of measurement $i$ in terms of $\Delta(\Sigmam, \lambda, v, i)$ defined in~\eqref{def_Delta_i}.

For the uncompromised system case, the optimization problem in \eqref{20020310_13} can be solved in closed form expression. The following theorem provides the solution.
\begin{theorem}\label{Theorem_vulnerability_singlesensor}
	The solution to the problem in~\eqref{20020310_13}, with $\Kc_o~=~\{1,2,\ldots,m\}$, is
	\begin{equation}\label{20220603_2}
	i= \argmin_{j \in \{1,2,\ldots,m\}} \left\{\left(\Sigmam_{Y\!Y}^{-1}\right)_{jj}\right\},
\end{equation}
	where $\Sigmam_{Y\!Y}$ is in~\eqref{20220706_1}.
\end{theorem}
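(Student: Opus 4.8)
The plan is to reduce the index minimization to a one-dimensional monotonicity question. The first observation is that the baseline term in the vulnerability $\Delta(\Sigmam,\lambda,v,i)=f(\Sigmam,\lambda,v,i)-f(\Sigmam,\lambda,0,i)$ carries no dependence on $i$: setting $v=0$ in~\eqref{20220915_2} deletes the rank-one perturbation $v\ev_i\ev_i^{\sf T}$ from all three matrix arguments, so $f(\Sigmam,\lambda,0,i)$ is a constant in $i$. Hence, for a fixed weight $\lambda$ and budget $v$, solving~\eqref{20020310_13} is equivalent to minimizing $f(\Sigmam,\lambda,v,i)$ over $i\in\Kc_o$.

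Next I would substitute the uncompromised baseline $\Sigmam=\textbf{0}$ from~\eqref{20220915_1} into~\eqref{20220915_2} and isolate the $i$-dependence with the matrix determinant lemma. The first determinant becomes $\left|\Sigmam_{Y\!Y}+v\ev_i\ev_i^{\sf T}\right| = \left|\Sigmam_{Y\!Y}\right|(1+v w_i)$ with $w_i\eqdef(\Sigmam_{Y\!Y}^{-1})_{ii}$; the second determinant satisfies $\left|\sigma^2\textbf{I}_m+v\ev_i\ev_i^{\sf T}\right| = \sigma^{2m}(1+v/\sigma^2)$ and is therefore independent of $i$ since $\ev_i^{\sf T}\ev_i=1$; and the trace contributes $v\,\textnormal{tr}(\Sigmam_{Y\!Y}^{-1}\ev_i\ev_i^{\sf T}) = v w_i$. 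Discarding every summand constant in $i$, the objective collapses, up to an additive constant, to the scalar function
\begin{equation*}
g(w_i)\eqdef \tfrac{1}{2}(1-\lambda)\log(1+v w_i)+\tfrac{1}{2}\lambda v w_i .
\end{equation*}

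Finally I would show that $g$ is strictly increasing on $w>0$, so that the minimizing index is the one with the smallest diagonal entry $w_i$. Differentiating and combining the two fractions yields
\begin{equation*}
g'(w)=\frac{v}{2}\left(\frac{1-\lambda}{1+vw}+\lambda\right)=\frac{v}{2}\cdot\frac{1+\lambda v w}{1+vw}>0,
\end{equation*}
where positivity holds because $v>0$, $w>0$, and $\lambda\geq 0$. Since $\Sigmam_{Y\!Y}^{-1}$ is positive definite, each $w_i=(\Sigmam_{Y\!Y}^{-1})_{ii}>0$, so the monotonicity of $g$ gives $\argmin_{i} g(w_i)=\argmin_{i} w_i$, which is precisely~\eqref{20220603_2}.

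I expect the only genuine obstacle to be the determinant-lemma bookkeeping in the second step: one must verify that nothing in the off-diagonal structure of $\Sigmam_{Y\!Y}$ survives, so that the entire $i$-dependence condenses into the single scalar $w_i$. Once the cost is reduced to $g(w_i)$, the clean cancellation in $g'(w)=\tfrac{v}{2}(1+\lambda vw)/(1+vw)$ makes the selection rule immediate, and notably independent of whether $\lambda\gtrless 1$.
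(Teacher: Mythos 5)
Your proposal is correct and follows essentially the same route as the paper's proof: cancel the $i$-independent baseline $f(\Sigmam,\lambda,0,i)$, apply the matrix determinant lemma to collapse the cost into a scalar function of $w_i=(\Sigmam_{Y\!Y}^{-1})_{ii}$, and conclude by monotonicity. The only difference is that you explicitly verify the monotonicity via the derivative $g'(w)=\tfrac{v}{2}(1+\lambda vw)/(1+vw)>0$, whereas the paper merely asserts that $(1-\lambda)\log(1+t)+\lambda t$ is increasing in $t$; your version is, if anything, slightly more complete.
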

\begin{proof}
	We start by noting that from~\eqref{20220915_1}, it yields that the vulnerability of measurement $i$ in~\eqref{def_Delta_i} is $\Delta(\textbf{0}, \lambda, v, i)$. From the equality in~\eqref{20220915_2}, the function $f (\textbf{0}, \lambda, 0, i)$ is a constant with respect to $i$. Hence, for $\Sigmam = \textbf{0}$, the optimization problem in~\eqref{20020310_13} is equivalent to
\begin{equation}\label{20220310_13}
	\min_{i \in \Kc_o} f(\textbf{0}, \lambda,v, i),
\end{equation}
	where $\Kc_o$ is defined in~\eqref{20220915_3}.
	Let $\lambda \in \mathbb{R}_+$ and $v \in \mathbb{R}_+$. From~\eqref{20220915_2}, the resulting problem in~\eqref{20220310_13} is equivalent to the following optimization problem:
	\begin{align} 
				\nonumber
		\min_{i \in \{1,2,\ldots,m\}}    & (1-\lambda)\textnormal{log}  \left| \Sigmam_{Y\!Y} + v\ev_i \ev_i^{\sf T} \right| -  \textnormal{log}  \left|  v\ev_i \ev_i^{\sf T} +\sigma^2 \textbf{I}_m\right| \\
		&+  \lambda v  \textnormal{tr}\left(\Sigmam_{Y\!Y}^{-1} \ev_i \ev_i^{\sf T} \right) \label{20220924_5}  \\
				\nonumber
		= \min_{i \in \{1,2,\ldots,m\}}& (1-\lambda)\textnormal{log}  \left| \textbf{I}_m + v\Sigmam_{Y\!Y}^{-1} \ev_i \ev_i^{\sf T} \right| -  \textnormal{log}  (v+ \sigma^2) \\
		&+  \lambda v  \textnormal{tr}\left(\Sigmam_{Y\!Y}^{-1} \ev_i \ev_i^{\sf T} \right) \label{20220924_6}  \\
				\nonumber
		= \min_{i \in \{1,2,\ldots,m\}}&(1 - \lambda)\textnormal{log}  \left( 1 + v \textnormal{tr}\left(\Sigmam_{Y\!Y}^{-1}\ev_i \ev_i^{\sf T} \right)\right) \\
		& +  \lambda v \textnormal{tr}\left(\Sigmam_{Y\!Y}^{-1}\ev_i \ev_i^{\sf T} \right), \label{20220924_7}
	\end{align} 
where the equivalence in~\eqref{20220924_5} holds from plugging in $\Sigmam = \textbf{0}$ into the equality in~\eqref{20220915_2}; the equality in~\eqref{20220924_6} follows from removing a constant $(1-\lambda)\textnormal{log} \left| \Sigmam_{Y\!Y} \right|$ from the first term; and the equality in~\eqref{20220924_7} follows from the fact that $\Sigmam_{Y\!Y}^{-1} \ev_i \ev_i^{\sf T}$ is a matrix with nonzero entries in the $i$-th column and all the other entries are zeros. 

We now proceed by defining $t \eqdef v \textnormal{tr}\left(\Sigmam_{Y\!Y}^{-1}\ev_i \ev_i^{\sf T} \right)$, with $t \in \mathbb{R}_+$. Hence, the equality in~\eqref{20220924_7} is equivalent to
\begin{align}\label{20221018_1}
\min_{t \in \mathbb{R}_+}&(1 - \lambda)\textnormal{log}  \left( 1 + t\right) + \lambda t.
\end{align} 
Note that~\eqref{20221018_1} is monotonically increasing with respect to $t$. Therefore, the cost function in~\eqref{20220924_7} is monotonically increasing with respect to $\textnormal{tr}\left(\Sigmam_{Y\!Y}^{-1}\ev_i \ev_i^{\sf T} \right)$. This completes the proof.
\end{proof}

From Theorem~\ref{Theorem_vulnerability_singlesensor}, it follows that the identification of the most vulnerable measurement is independent of $\lambda$, introduced in~\eqref{MI_f_k}, and the variance $v$. That is, it exclusively depends on the system topology and parameters denoted by $\Sigmam_{Y\!Y}$ defined in~\eqref{20220706_1}. This result coincides with Theorem~\ref{Theorem_singlesensor} in the sense that in the attack construction for $k=1$, the most vulnerable measurement is characterized in~\eqref{eq:opt_i}, which is independent of the value of $\lambda$. The following corollary formalizes this observation.

\begin{corollary}\label{coro_vulnerabilityordering_k0}
	Consider the parameters $\Sigmam = \textbf{0}$, $v \in \mathbb{R}_+$ and $\lambda \in \mathbb{R}_+$. The vulnerability ranking for all measurements 
	\begin{equation}
	\sv \eqdef (s_1, s_2, \ldots,s_m)
\end{equation}
	is such that for all measurements $i$, with~$i~\in~\{1,2,\ldots,m\}$, $s_i \in \{1,2,\ldots,m\}$ and 
	\begin{equation}
	\textnormal{tr}\left(\Sigmam_{Y\!Y}^{-1}\ev_{s_1} \ev_{s_1}^{\sf T} \right) \leq \textnormal{tr}\left(\Sigmam_{Y\!Y}^{-1}\ev_{s_2} \ev_{s_2}^{\sf T} \right) \leq \ldots \leq \textnormal{tr}\left(\Sigmam_{Y\!Y}^{-1}\ev_{s_m} \ev_{s_m}^{\sf T} \right).
\end{equation}
	For all $i\in \{1,2,\ldots,m\}$, the $i$-th most vulnerable measurement is $s_i$.
\end{corollary}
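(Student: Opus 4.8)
The plan is to observe that this corollary is an immediate consequence of the monotonicity argument already established in the proof of Theorem~\ref{Theorem_vulnerability_singlesensor}, promoted from the identification of a single minimizer to a total ordering of all $m$ measurements. First I would recall that, under $\Sigmam = \textbf{0}$, the term $f(\textbf{0}, \lambda, 0, i)$ appearing in the definition of $\Delta$ in~\eqref{def_Delta_i} does not depend on $i$; consequently $\Delta(\textbf{0}, \lambda, v, i)$ and $f(\textbf{0}, \lambda, v, i)$ differ by a constant and therefore induce the same ordering over the index $i \in \{1,2,\ldots,m\}$. The chain of equivalences~\eqref{20220924_5}--\eqref{20220924_7} then shows that, up to this additive constant, the vulnerability of measurement $i$ equals $g(t_i)$, where $g(t) \eqdef (1-\lambda)\log(1+t) + \lambda t$ and $t_i \eqdef v\,\textnormal{tr}(\Sigmam_{Y\!Y}^{-1}\ev_i\ev_i^{\sf T})$.

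The key step is to verify that this reduction preserves the strict order of the scalars $\textnormal{tr}(\Sigmam_{Y\!Y}^{-1}\ev_i\ev_i^{\sf T})$. On the one hand, since $v \in \mathbb{R}_+$ is fixed and common to every measurement, the map $\textnormal{tr}(\Sigmam_{Y\!Y}^{-1}\ev_i\ev_i^{\sf T}) \mapsto t_i$ is linear with positive slope and hence strictly increasing. On the other hand, $g$ is strictly increasing on $\mathbb{R}_+$ for every $\lambda \in \mathbb{R}_+$: its derivative is $g'(t) = (1-\lambda)/(1+t) + \lambda = (1+\lambda t)/(1+t)$, which is strictly positive for $t \geq 0$ and $\lambda \geq 0$, irrespective of the sign of $1-\lambda$. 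Composing these two monotone maps, I would conclude that for any pair of indices $i,j$, the inequality $\Delta(\textbf{0}, \lambda, v, i) \leq \Delta(\textbf{0}, \lambda, v, j)$ holds if and only if $\textnormal{tr}(\Sigmam_{Y\!Y}^{-1}\ev_i\ev_i^{\sf T}) \leq \textnormal{tr}(\Sigmam_{Y\!Y}^{-1}\ev_j\ev_j^{\sf T})$.

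Finally, recalling from Definition~\ref{def:vul_unc} and the minimization in~\eqref{20020310_13} that a smaller value of $\Delta$ signifies greater vulnerability, the ranking $\sv$ is precisely the permutation of $\{1,2,\ldots,m\}$ that lists the measurements in nondecreasing order of their cost, which by the equivalence just established coincides with listing them in nondecreasing order of $\textnormal{tr}(\Sigmam_{Y\!Y}^{-1}\ev_i\ev_i^{\sf T})$. This yields exactly the displayed chain of inequalities and identifies $s_i$ as the $i$-th most vulnerable measurement. I do not anticipate a genuine obstacle here, since the statement is essentially a sorting reformulation of an already-proven monotonicity; the only point that requires care is confirming that $g$ remains monotone across the full parameter range $\lambda \in \mathbb{R}_+$, in particular for $\lambda < 1$ where the logarithmic term carries a negative coefficient, which the derivative computation above settles cleanly.
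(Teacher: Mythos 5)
Your proposal is correct and follows essentially the same route as the paper: the corollary is presented there as an immediate consequence of the monotonicity argument in the proof of Theorem~\ref{Theorem_vulnerability_singlesensor}, which reduces the cost to $(1-\lambda)\log(1+t)+\lambda t$ with $t = v\,\textnormal{tr}\bigl(\Sigmam_{Y\!Y}^{-1}\ev_i\ev_i^{\sf T}\bigr)$ and invokes its monotonicity in $t$, exactly as you do. Your explicit derivative computation $g'(t) = (1+\lambda t)/(1+t) > 0$ is a welcome refinement, since the paper merely asserts the monotonicity of~\eqref{20221018_1} without addressing the case $\lambda > 1$ where the logarithmic term has a negative coefficient.
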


\subsection{Vulnerability index (VuIx)} 
The vulnerability analysis of uncompromised systems in Section~\ref{Vulnerability analysis on single sensor attack} is constrained to $k = 0$. To generalize the vulnerability analysis to compromised systems when $k > 0$, in the following we propose a novel metric, that is, {\it vulnerability index}, for all $i \in \Kc_o$.
\begin{definition}
	\label{def:vul_com}
	For $k \in \{1,2,\ldots,m-1\}$ and $\Sc_{k}$ in~\eqref{20220609_1}, consider the parameters $\Sigmam\in\Sc_{k}$, $v \in \mathbb{R}_+$, $\lambda \in \mathbb{R}_+$. Consider also the set $\{(i, \Delta): i \in \Kc_o \}$, with $\Kc_o$ in~\eqref{set_K_o} and 
	\begin{equation}
	\Delta_i \eqdef \Delta(\Sigmam, \lambda, v, i).
	\end{equation}
 Let the vulnerability ranking 
 \begin{equation}
 	\rv = (r_1, r_2, \ldots, r_{|\Kc_o|})
\end{equation}
 be such that for all $i \in \{1,2,\ldots,|\Kc_o|\}$, $r_i \in \Kc_o$ and moreover,
	\begin{equation}
	\Delta_{r_1}  \leq \Delta_{r_2} \leq \ldots \leq \Delta_{r_{|\Kc_o|}}. 
\end{equation}
	The vulnerability index (VuIx) of measurement $r_j \in \Kc_o$ is $j$, that is, $\textnormal{VuIx}(r_j)=j$.
\end{definition}

\begin{figure}
	\centering
	\includegraphics[width=0.5\textwidth]{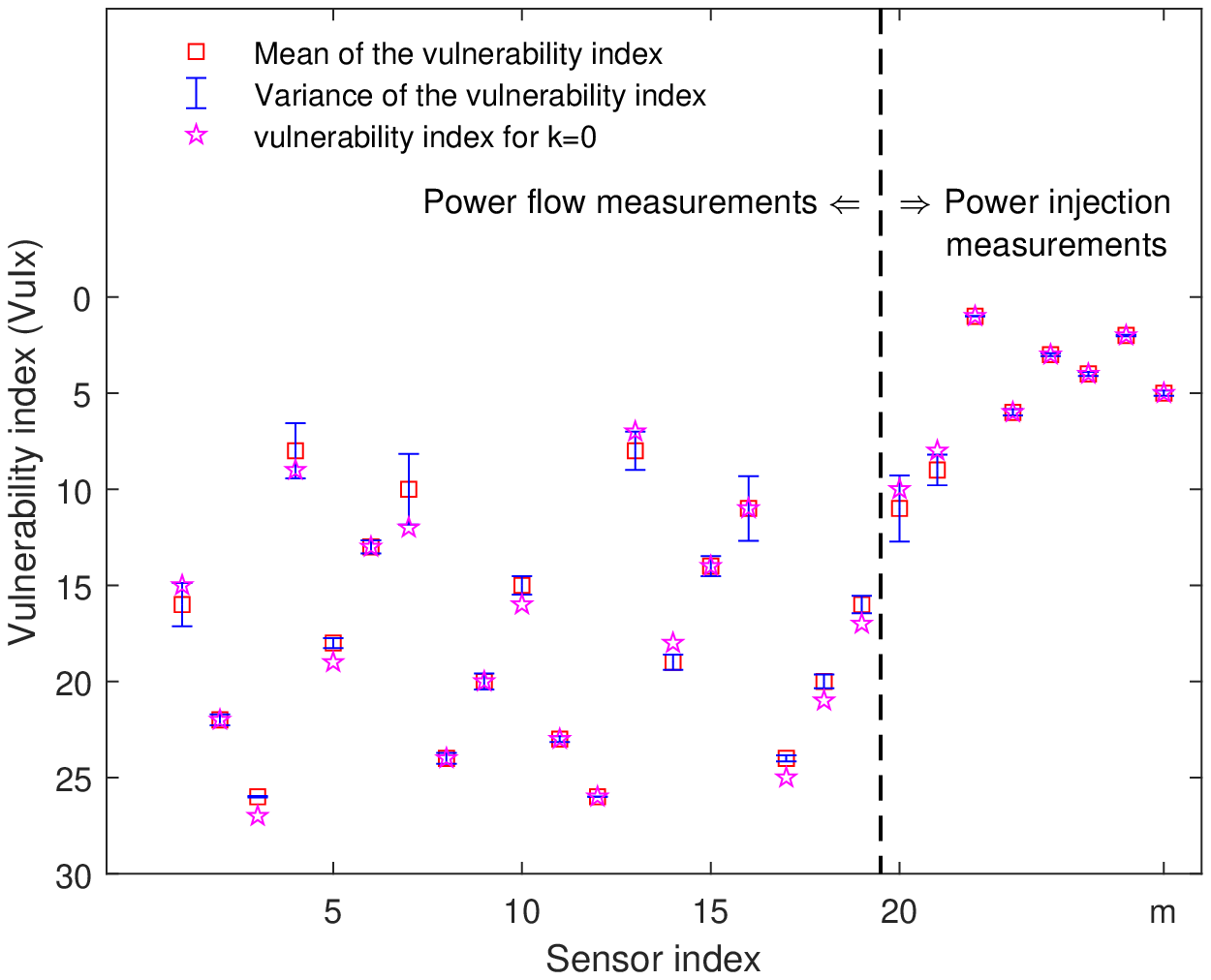}
	\caption{Vulnerability index (VuIx) when {$k = 1$, SNR = 10 dB}, $\lambda = 2$ and $\rho=0.1$ on the IEEE {9-bus system}.}\label{fig1}
	\centering
	\includegraphics[width=0.5\textwidth]{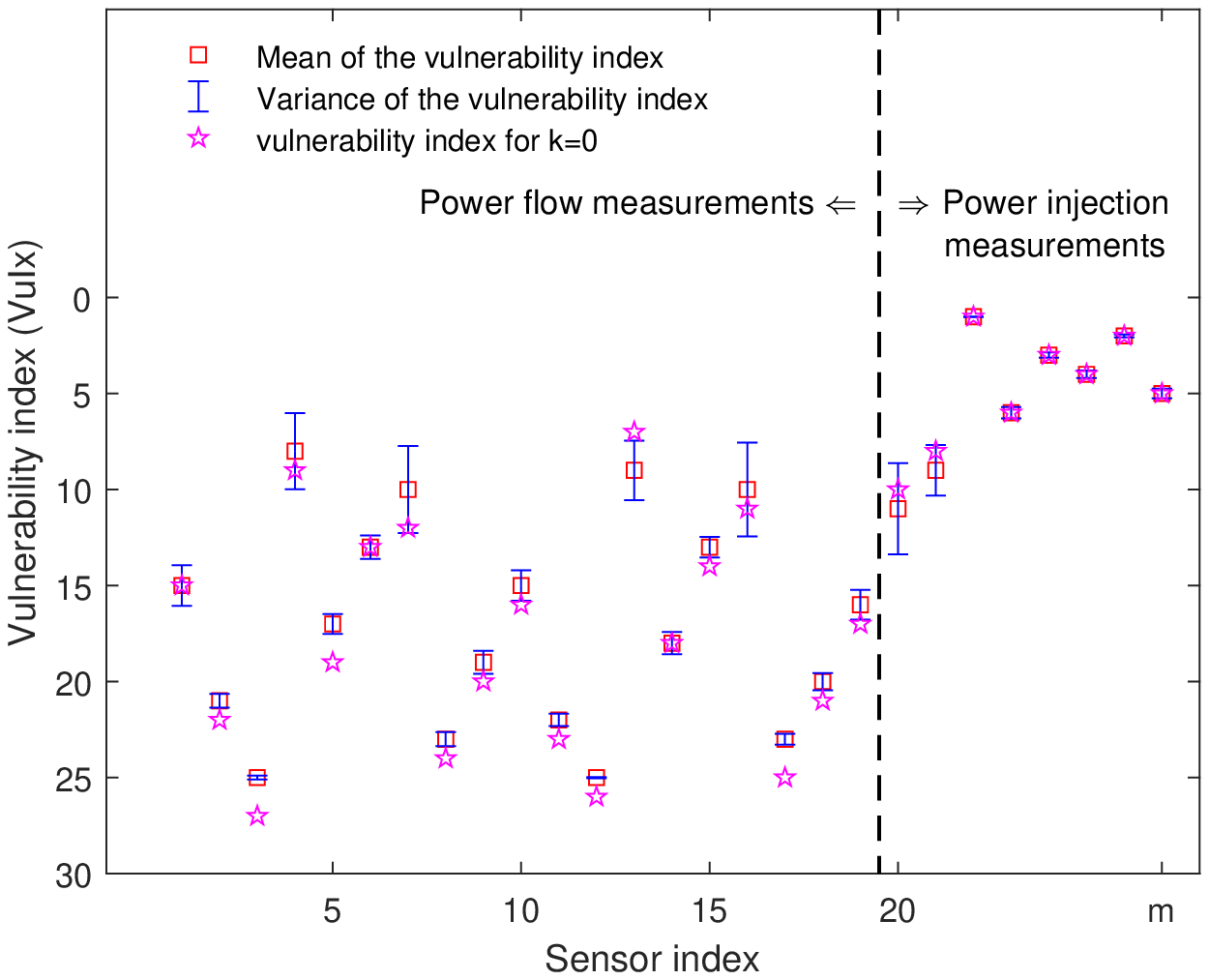}
	\caption{Vulnerability index (VuIx) when {$k = 2$, SNR = 10 dB}, $\lambda = 2$ and $\rho=0.1$ on the IEEE {9-bus system}.}\label{fig2}
\end{figure}
\begin{figure}
	\centering
	\includegraphics[width=0.5\textwidth]{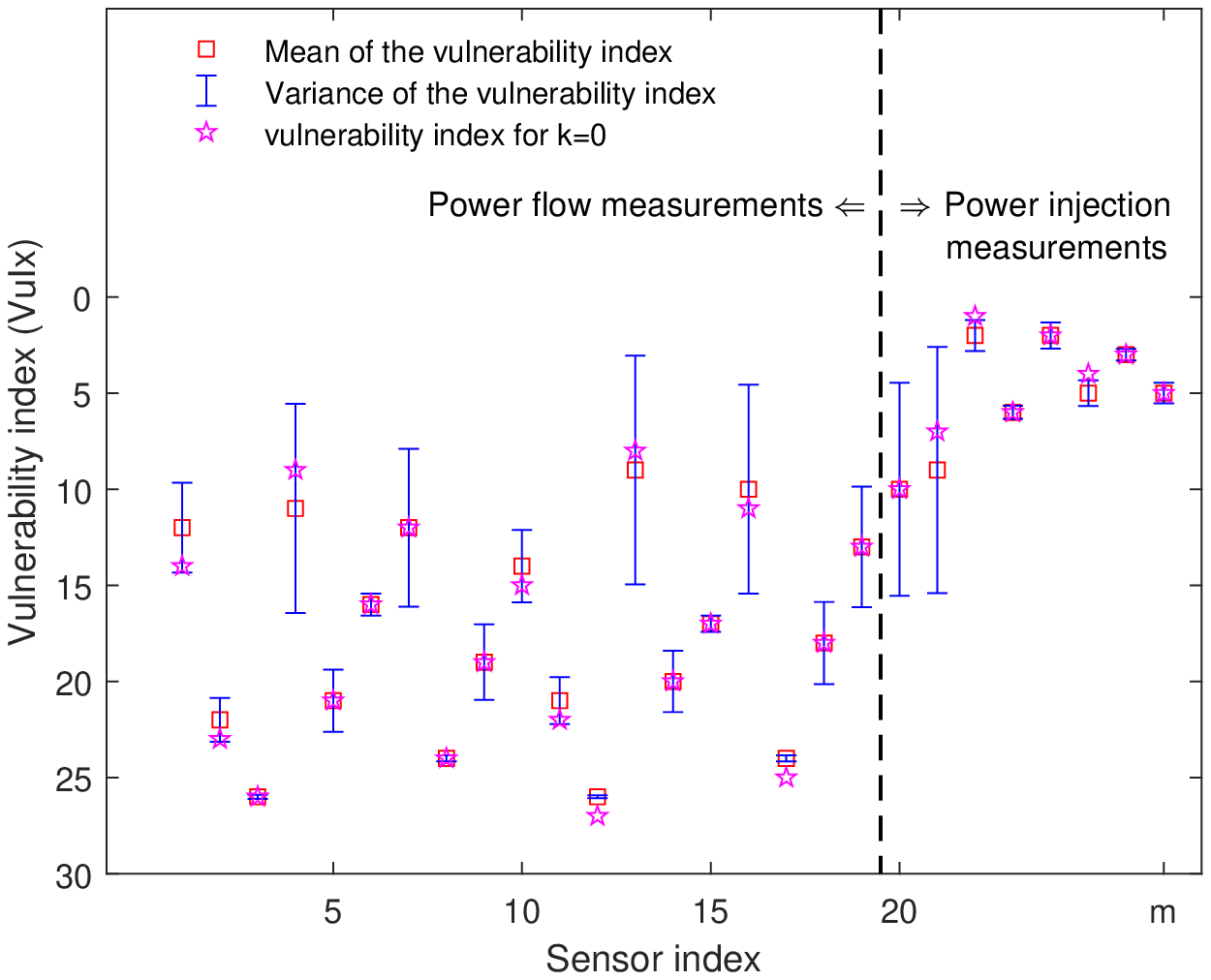}
	\caption{Vulnerability index (VuIx) when {$k = 1$, SNR = 30 dB}, $\lambda = 2$ and $\rho=0.1$ on the IEEE {9-bus system}.}\label{fig3}
	\centering
	\includegraphics[width=0.5\textwidth]{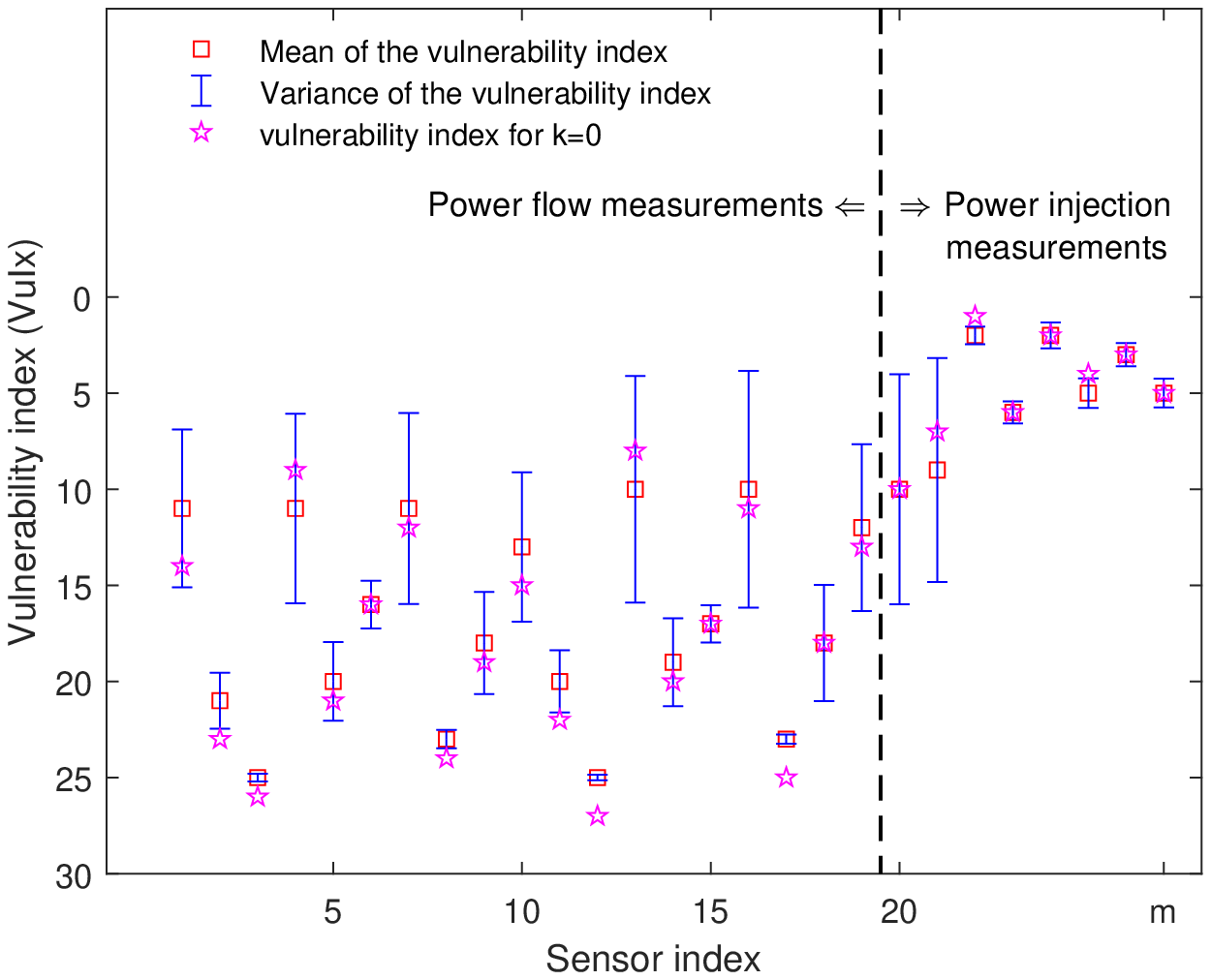}
	\caption{Vulnerability index (VuIx) when {$k = 2$, SNR = 30 dB}, $\lambda = 2$ and $\rho=0.1$ on the IEEE {9-bus system}.}\label{fig4}
\end{figure}

Note that the measurement with the smallest VuIx is the most vulnerable measurement that corresponds to the solution to the optimization problem in~\eqref{20020310_13}.
The proposed VuIx for $i \in \Kc_o$ is obtained from Algorithm~\ref{alg:vulnerability index}.
\begin{algorithm}
	\caption{Computation of Vulnerability Index (VuIx)}
	\label{alg:vulnerability index}
	\begin{algorithmic}[1]
		\Require $\Hm$ in \eqref{eq:obs_noattack};\newline
		$\sigma^2$ in \eqref{Z_distribution};\newline
		$\Sigmam_{X\!X}$ in \eqref{Sigma_xx};\newline
		$\Sigmam \in \Sc_k$ in \eqref{20220310_1};\newline
		$\lambda \in \mathbb{R}_+$ and 
		$v \in \mathbb{R}_+$.
		\Ensure the VuIx for all $i \in \Kc_o$.
		\State Set $\Kc_o$ in~\eqref{set_K_o}
		\For {$i \in \Kc_o$}
		\State Compute $\Delta(\Sigmam,\lambda,v,i)$ in~\eqref{def_Delta_i}
		\EndFor
		\State Sort $\Delta(\Sigmam, \lambda, v, i) $ in ascending order
		\State Set $\rv = (r_1, r_2, \ldots, r_{|\Kc_o|})$  
		\State Set the VuIx of measurement $r_j \in \Kc_o$ as $j$.
	\end{algorithmic}
\end{algorithm}

\begin{figure}[htbp]
	\centering
	\includegraphics[width=9cm]{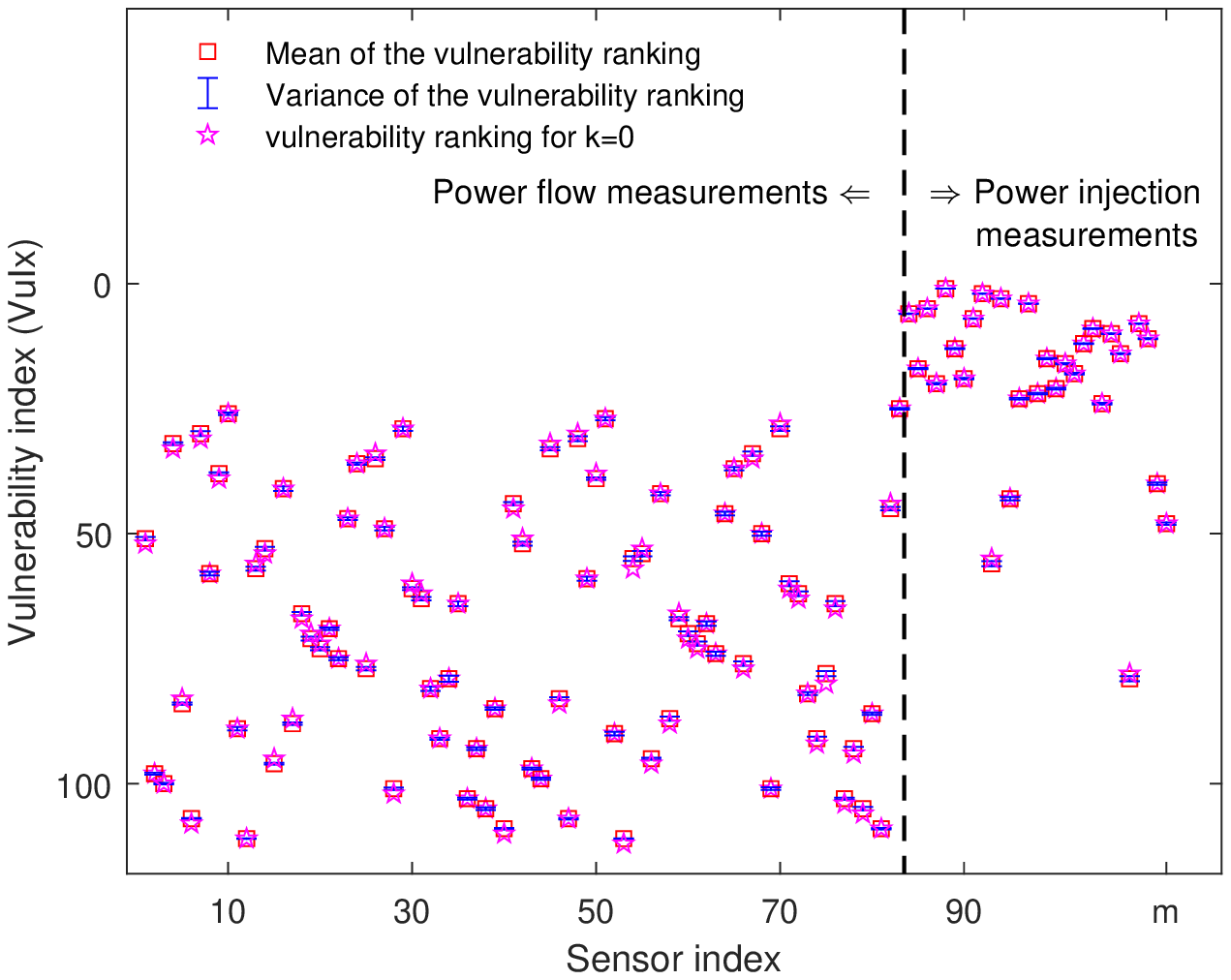}
	\caption{Vulnerability index (VuIx) when {$k = 1$}, SNR = 10 dB, $\lambda = 2$ and $\rho=0.1$ on the IEEE {30-bus system}.}\label{fig5}
	\centering
	\includegraphics[width=9cm]{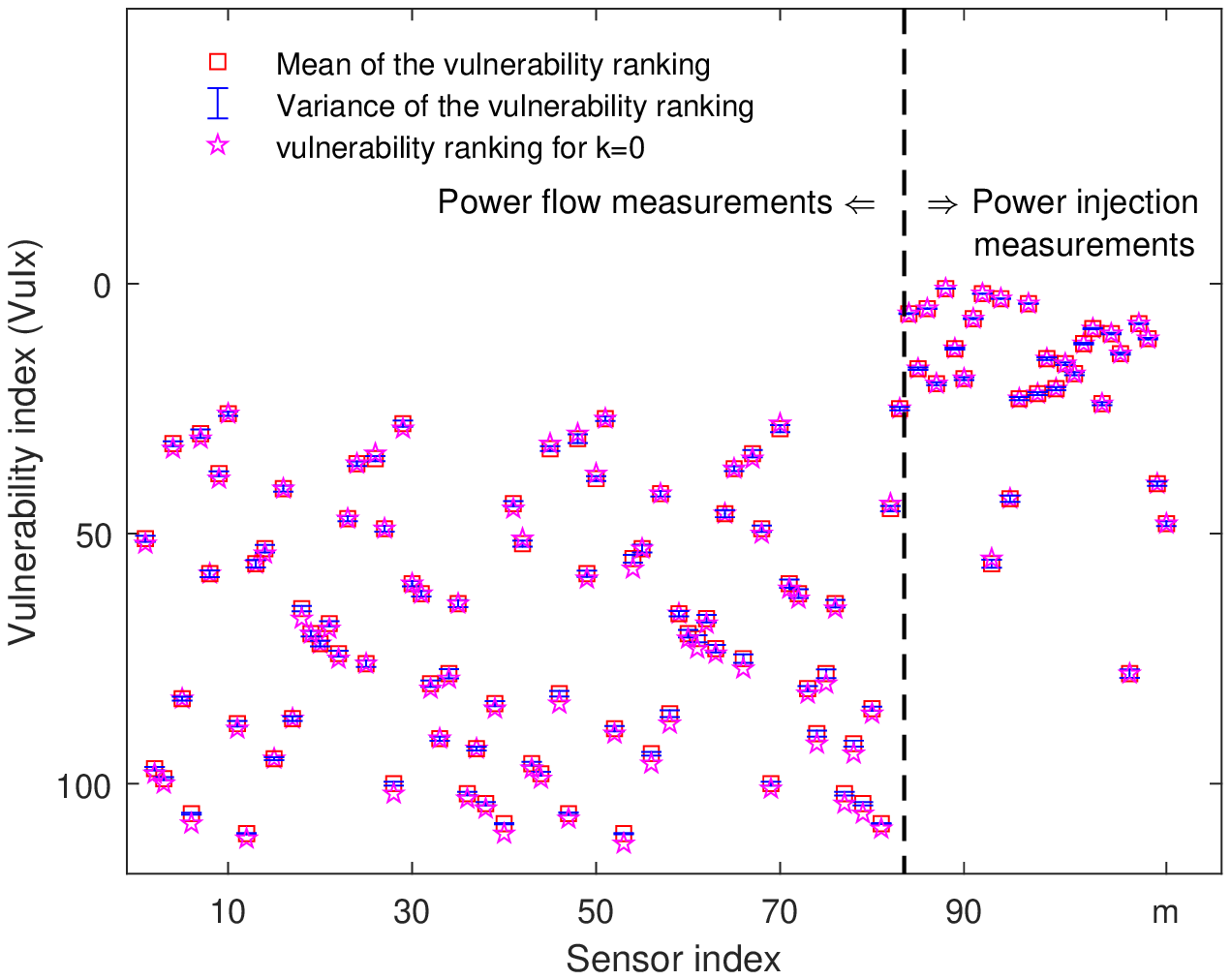}
	\caption{Vulnerability index (VuIx) when {$k = 2$}, SNR = 10 dB, $\lambda = 2$ and $\rho=0.1$ on the IEEE {30-bus system}.}\label{fig6}
\end{figure}
\begin{figure}[htbp]
	\centering
	\includegraphics[width=9cm]{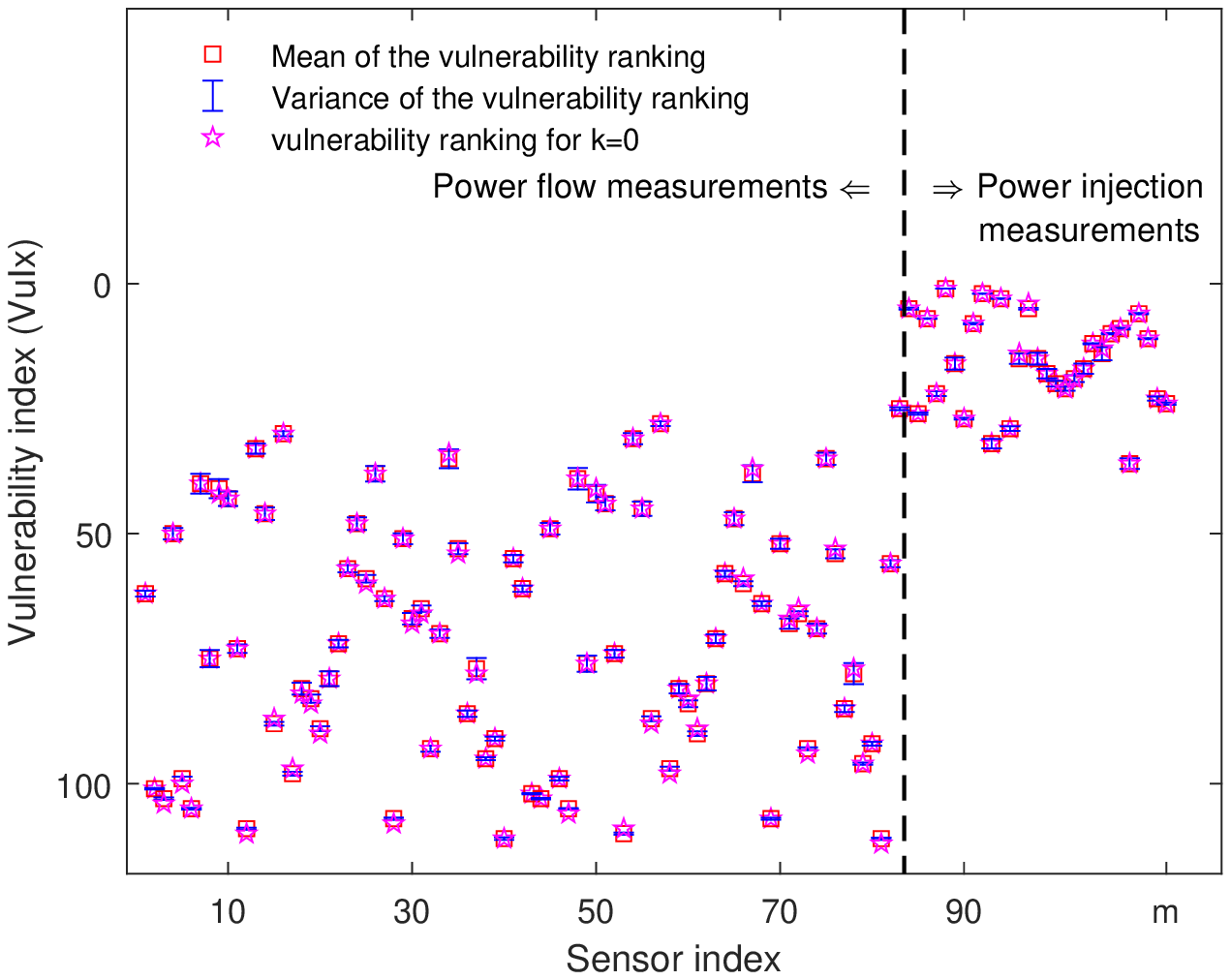}
	\caption{Vulnerability index (VuIx) when {$k = 1$}, SNR = 30 dB, $\lambda = 2$ and $\rho=0.1$ on the IEEE {30-bus system}.}\label{fig7}
	\centering
	\includegraphics[width=9cm]{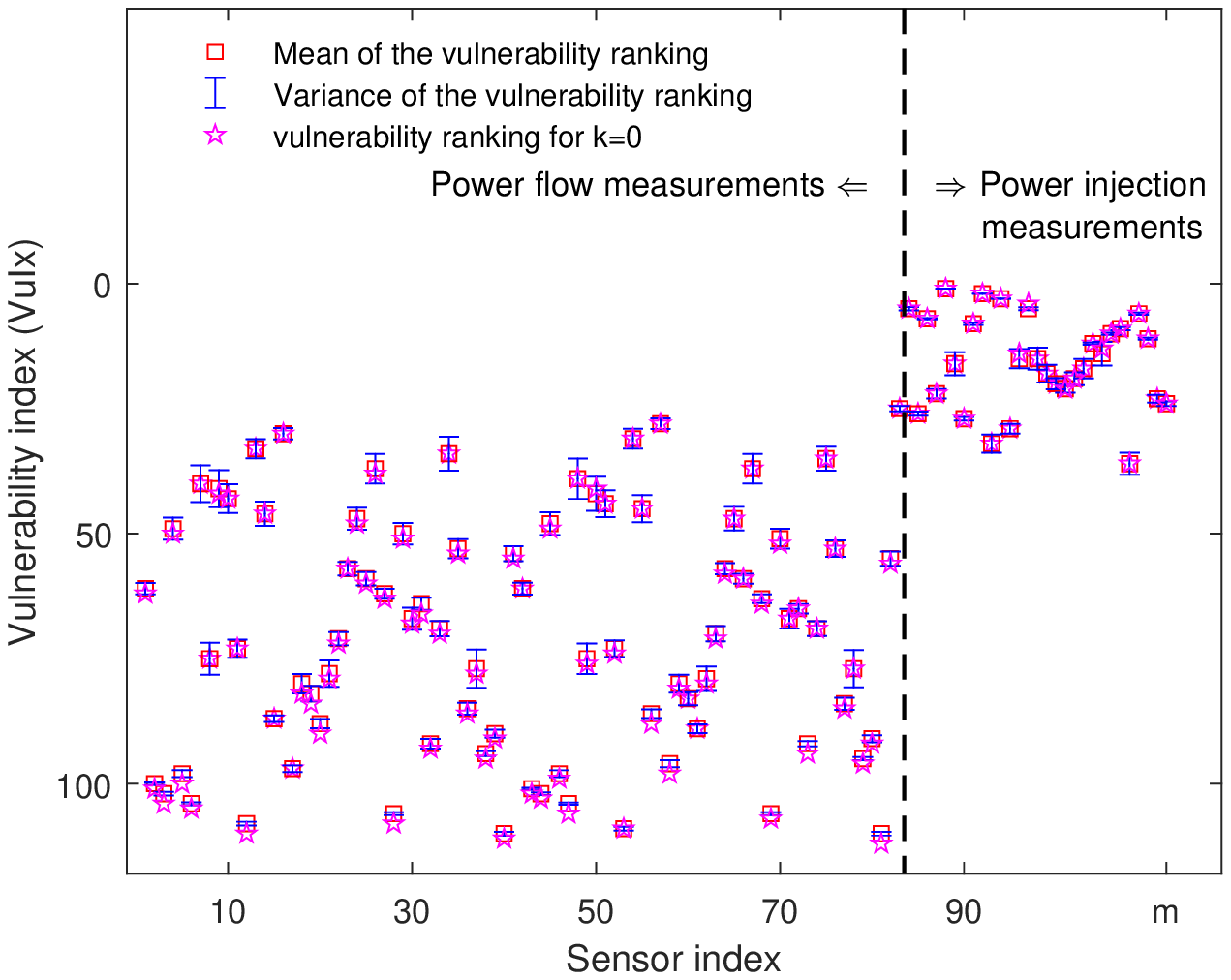}
	\caption{Vulnerability index (VuIx) when {$k = 2$}, SNR = 30 dB, $\lambda = 2$ and $\rho=0.1$ on the IEEE {30-bus system}.}\label{fig8}
\end{figure}
\vspace{-2mm}

\section{Numerical results}\label{sec_simulation}
In this section, we numerically evaluate the VuIx of the measurements on a direct current (DC) setting for the IEEE Test systems~\cite{UoW_ITC_99}. The voltage magnitudes are set to 1.0 per unit, that is, the measurements of the systems are active power flow between the buses that are physically connected and active power injection to all the buses. The Jacobian matrix $\Hm$ in~\eqref{eq:obs_noattack} determined by the topology of the system and the physical parameters of the branches is generated by MATPOWER~\cite{matpower}. We adopt a Toeplitz model for the covariance matrix $\Sigmam_{X\!X}$ that arises in a wide range of practical settings, such as autoregressive stationary processes. Specifically, we model the correlation between state variable $X_i$ and $X_j$ with an exponential decay parameter $\rho \in \mathbb{R}_+$, which results in the entries of the matrix $(\Sigmam_{X\!X})_{ij} = \rho^{|i-j|}$ with $(i,j) \in \{1,2,\ldots, n\} \times \{1,2,\ldots, n\}$.
In this setting, the VuIx of the measurements is also a function of the correlation parameter $\rho$, the noise variance $\sigma^2$, and the Jacobian matrix $\Hm$. The noise regime in the observation model is characterized by the signal to noise ratio (SNR) defined as
\begin{equation}
\textrm{SNR} \eqdef 10\log_{10}\left(\frac{\textnormal{tr}(\Hm\Sigmam_{X\!X}\Hm^{\sf T})}{m\sigma^{2}}\right).
\end{equation}

For all $\lambda \in \mathbb{R}_+$ and $v \in \mathbb{R}_+$, we generate a realization of $k$ attacked indices $\Kc_a\subseteq\{1, 2, \ldots, m\}$ that is uniformly sampled from the set of sets given by
\begin{equation}
\tilde{\Kc}=\left\{\Ac\subseteq\{1, 2, \ldots, m\}: |\Ac|=k\right\}.
\end{equation}
We then construct a random covariance matrix describing the existing attacks on the system as
\begin{equation}
\tilde{\Sigmam}=\sum_{i\in{\Kc_a}}\ev_i\ev_i^{\sf T},
\end{equation}
with $\Kc_a\in\tilde{\Kc}$.
In the numerical simulation, we obtain the vulnerability of measurement $i$ by computing
\begin{equation}
\Delta(\tilde{\Sigmam}, \lambda, 1, i),
\end{equation}
where $i \in \Kc_o$ is in~\eqref{set_K_o} and $\Delta$ is defined in~\eqref{def_Delta_i}.

\subsection{Assessment of vulnerability index (VuIx)}\label{Performance of VuIx for all the measurements}
Fig.~\ref{fig1} and Fig.~\ref{fig2} depict the mean and variance of the VuIx obtained from Algorithm~\ref{alg:vulnerability index} for all the measurements with SNR = 10 dB, $\lambda = 2$ and $\rho = 0.1$ on the IEEE 9-bus system when $k = 1$ and $k =2$, respectively. 
It is observed that power injection measurements yield higher priority vulnerability indices, which indicates that power injection measurements are more vulnerable to data integrity attacks. 
Most power injection measurements correspond to higher ranked vulnerability indices but there are instances of power flow measurements with a higher ranked VuIx than that of some power injection measurements. 
Interestingly, the power injection measurements with lower vulnerability indices correspond to the buses that are more isolated in the system, that is, the buses with a lower number of connections. On the other hand, the power flow measurements with higher ranked vulnerability indices correspond to the branches with higher admittance. 
The VuIx for $k=0$ obtained in Corollary~\ref{coro_vulnerabilityordering_k0} is depicted for the purpose of  serving as a reference to assess the deviation when $k>0$. Interestingly, the VuIx of most measurements does not change substantially for different values of $k$, which suggests that the VuIx is insensitive to the state of the system.

Fig.~\ref{fig3} and Fig.~\ref{fig4} depict the mean and variance of the VuIx from Algorithm~\ref{alg:vulnerability index} for all the measurements with SNR = 30 dB, $\lambda = 2$ and $\rho = 0.1$ on the IEEE 9-bus system when $k = 1$ and $k = 2$, respectively. Interestingly, the mean of the VuIx for most of the measurements does not deviate significantly from the case when $k = 0$. Instead, most of the variances deviate significantly in comparison with the cases in Fig.~\ref{fig1} and Fig.~\ref{fig2} with SNR = 10 dB.
Fig.~\ref{fig5} and Fig.~\ref{fig6} depict the results on IEEE 30-bus systems with the same setting as in Fig.~\ref{fig1} and Fig.~\ref{fig2}, respectively. Fig.~\ref{fig7} and Fig.~\ref{fig8} depict the results on IEEE 30-bus systems with the same setting as in Fig.~\ref{fig3} and Fig.~\ref{fig4}, respectively. Surprisingly, the mean of the VuIx in larger systems coincides with that obtained for the case $k=~0$, which shows that the VuIx is a robust security metric for large systems. Interestingly, the power injection measurements corresponding to the least connected buses decrease in the VuIx when SNR = 10 dB. 
\subsection{Comparative vulnerability assessment of power flow and power injection measurements}
In Section~\ref{Performance of VuIx for all the measurements} we have established that power injection measurements and power flow measurements are qualitatively different in terms of the VuIx. 
To provide a quantitative description of this difference, Fig.~\ref{fig9} depicts the probability of a given VuIx $i\in\left\{1, 2, \ldots, m-|\Kc_a|\right\}$ being taken by a power injection measurement or a power flow measurement for the IEEE 9-bus and 30-bus systems when $\lambda = 2$, $k = 2$, SNR = 30 dB and $\rho=0.1$. Specifically, Fig.~\ref{fig9} depicts the probability of the following events:
\vspace{-2mm}
\begin{align*}
	{\sf Flow}_i\!&\!: \textnormal{VuIx $i$ corresponds to a power flow measurement},\\
	{\sf Inj}_i\!&\!: \textnormal{VuIx $i$ corresponds to a power injection measurement}.
\end{align*}
It is observed that in both systems, small VuIx are more likely to correspond to power injection measurements than to power flow measurements, that is, $\mathbb{P}[{\sf Inj}_i]>\mathbb{P}[{\sf Flow}_i]$ for small values of $i$. Conversely, it holds that $\mathbb{P}[{\sf Inj}_i]<\mathbb{P}[{\sf Flow}_i]$ for large values of $i$.
In fact, small VuIx corresponding to power injection measurements is with probability one, which shows that the most vulnerable measurements in the system are always power injection measurements. 
Conversely, the larger VuIx values corresponding to power flow measurements is with probability one, which indicates that the least vulnerable measurements are always power flow measurements. 
\begin{figure}[htbp]
	\vspace{-1mm}
	\centering
	\includegraphics[width=9cm]{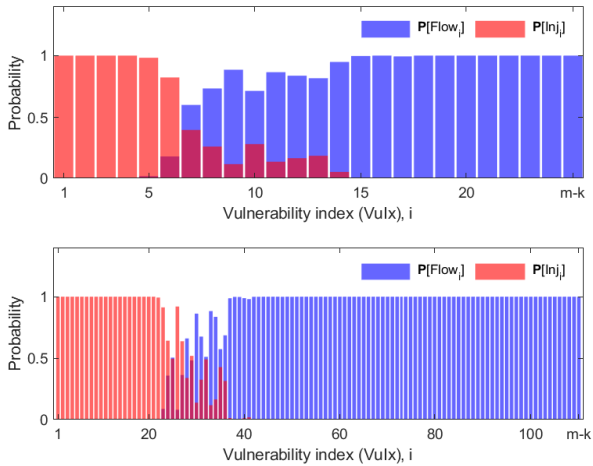}
	\caption{Probability mass function of Vulnerability index (VuIx) for power injection measurements and power flow measurements when $\lambda = 2$, $k = 2$, SNR = 30 dB and $\rho=0.1$ on IEEE 9-bus and 30-bus systems, respectively.}\label{fig9}
\end{figure}
Interestingly, there is a clear demarcation for each system for which $\mathbb{P}[{\sf Inj}_i]$ and $\mathbb{P}[{\sf Flow}_i]$ change rapidly with the VuIx value, which suggests a phase transition type phenomenon for measurement vulnerability.

The probability of VuIx taken by power injection measurements has high probability mass in higher priority vulnerability indices. One the other hand, power flow measurements are with higher probability mass in low ranked VuIx. Precisely, the probability of the vulnerability indices with higher priority taken by power injection measurements is 1 in both IEEE 9-bus and 30-bus systems. Meanwhile, the probability of the lower ranked vulnerability indices taken by power flow measurements is 1.
Note that the probability of medium ranked vulnerability indices taken by power injection measurements drops significantly, which indicates that there are some power flow measurements that are equally as vulnerable as power injection measurements. We observe that these power flow measurements correspond to the branches with higher admittance. The power injection measurements with lower vulnerability indices correspond with the buses that are isolated in the systems.

Fig.~\ref{fig10} depicts the distribution of VuIx for power injection measurements and power flow measurements on the IEEE 9-bus and 30-bus systems when $\lambda = 2$, $k = 2$, SNR = 30 dB and $\rho=0.1$. Specifically, Fig.~\ref{fig10} depicts the probability mass function of the following events:
\vspace{-2mm}
\begin{align*}
		{\sf VuIx(Flow)} = i\!&\!: \textnormal{VuIx for power flow measurements is $i$},\\
		{\sf VuIx(Inj)} = i\!&\!: \textnormal{VuIx for power injection measurements is $i$}.
\end{align*}
Power injection measurements have much higher probability with high ranked VuIx. VuIx, whereas power flow measurements have much higher probability with low ranked VuIx. It is worth noting that the probability mass functions are close to uniform for high and low vulnerability index ranges. This suggests that the most vulnerable measurements in the system are contained with high probability in a subset of the power injection measurements. Conversely, the least vulnerable measurements comprise the majority of the power flow measurements with no apparent preference over the majority. Interestingly, in the 30-bus system, the probability of lowest ranked VuIx for power flow measurements experiences a sharp increase.

\section{Conclusion}\label{sec_conclusion}
In this paper, we have proposed, from a fundamental perspective, a novel security metric referred to as
vulnerability index (VuIx) that characterizes the vulnerability of power system measurements to data integrity attacks.
We have achieved this by embedding information theoretic measures into the metric definition. The resulting VuIx framework evaluates the vulnerability of all the measurements in the systems and enables the operator to identify those that are more exposed to data integrity threats. We have tested the framework for IEEE test systems and concluded that power injection measurements are more vulnerable to data integrity attacks than power flow measurements. 

\begin{figure}[htbp]
	\centering
	\includegraphics[width=9cm]{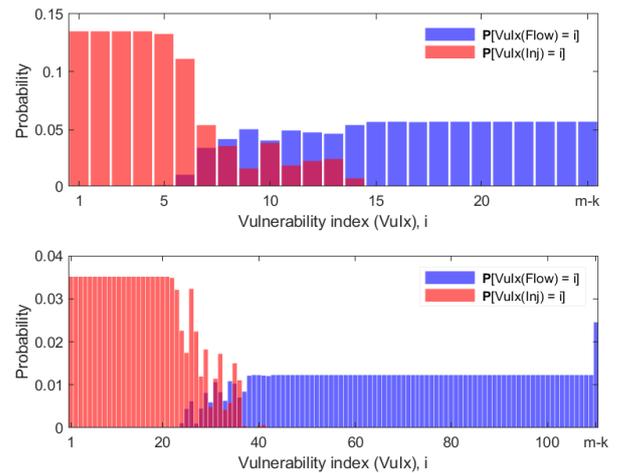}
	\caption{Probability of Vulnerability index (VuIx) corresponds to power injection measurements and power flow measurements when $\lambda = 2$, $k = 2$, SNR = 30 dB and $\rho=0.1$ on IEEE 9-bus and 30-bus systems, respectively.}\label{fig10}
\end{figure}
\bibliography{IET_SG_22}
\bibliographystyle{vancouver}
\end{document}